\newcommand{\R}{{\mathbb{R}}}
\newcommand{\Cs}{{\mathcal{C}}}
\newcommand{\N}{{\mathbb{N}}}
\newcommand{\X}{{\mathbf{X}}}
\newcommand{\U}{{\mathbf{U}}}
\newcommand{\F}{{\lozenge}}
\newcommand{\G}{{\Box}}
\newcommand{\cen}{\mathsf c}
\newcommand{\rad}{\mathsf r}
\newcommand{\W}{{\mathcal{W}}}
\newcommand{\B}{{\mathcal{B}}}
\newcommand{\true}{{\mathsf{true}}}
\newcommand\mydots{\hbox to 1em{.\hss.\hss.}}
\newtheorem{theorem}{Theorem}[section]
\newtheorem{assumption}{Assumption}
\newtheorem{definition}[theorem]{Definition}
\newtheorem{lemma}[theorem]{Lemma}
\newtheorem{remark}[theorem]{Remark}
\newtheorem{problem}[theorem]{Problem}
\newtheorem{proof}[theorem]{Proof}
\title{Control Barrier Functions for the Full Class of Signal Temporal Logic Tasks using Spatiotemporal Tubes
\thanks{ This work was supported in part by the SERB Start-Up Research Grant; in part by the ARTPARK. The work of Ratnangshu Das was supported by the Prime Minister’s Research Fellowship from the Ministry of Education, Government of India.}
}
\author{
 Ratnangshu Das \\
  Robert Bosch Centre for Cyber-Physical Systems\\
  IISc, Bengaluru, India\\
  \texttt{ratnangshud@iisc.ac.in} \\
   \And
 Subhodeep Choudhury \\
  Robert Bosch Centre for Cyber-Physical Systems\\
  IISc, Bengaluru, India\\
  \texttt{subho02.dc@gmail.com} \\
  \And
 Pushpak Jagtap \\
  Robert Bosch Centre for Cyber-Physical Systems\\
  IISc, Bengaluru, India\\
  \texttt{pushpak@iisc.ac.in} \\
}
\begin{document}
\maketitle

\begin{abstract}
This paper introduces a new framework for synthesizing time-varying control barrier functions (TV-CBFs) for general Signal Temporal Logic (STL) specifications using spatiotemporal tubes (STT). We first formulate the STT synthesis as a robust optimization problem (ROP) and solve it through a scenario optimization problem (SOP), providing formal guarantees that the resulting tubes capture the given STL specifications. These STTs are then used to construct TV-CBFs, ensuring that under any control law rendering them invariant, the system satisfies the STL tasks. We demonstrate the framework through case studies on a differential-drive mobile robot and a quadrotor, and provide a comparative analysis showing improved efficiency over existing approaches.
\end{abstract}

\section{Introduction}
Ensuring that autonomous systems, such as robots and unmanned vehicles, operate safely and reliably while performing complex missions is a central challenge in modern control theory. Signal Temporal Logic (STL) \cite{STL_OdedMaler} has emerged as a powerful and expressive formal language for defining complex, time-dependent tasks that go well beyond classical goals like stabilization and trajectory tracking. 
As a result, STL has gained increasing attention across various domains, including neural and deep learning \cite{STL_RNN, STLnet}, reinforcement learning \cite{STL_RL}, learning from demonstration \cite{STL_lfd}, and planning and control \cite{2025arXiv250713225S, STL_multi_chuchu}. 
However, despite its expressive power, synthesizing controllers that guarantee satisfaction of STL specifications remains both computationally demanding and algorithmically challenging.

Existing approaches for STL-constrained control, like Mixed-Integer Programming (MIP) \cite{STL_reactive_MILP} and gradient-based methods \cite{STL_grad}, often rely on optimization and scale poorly with the complexity of the STL formula and system dynamics. 
Model Predictive Control (MPC) has also been extended to incorporate STL constraints within a receding-horizon framework \cite{STL_MPC, STL_MPC3}, yet these methods often become computationally intractable for long or complex tasks. 
Symbolic control provides formal correctness guarantees but typically depends on state-space abstractions \cite{tabuada2009verification}, making it highly susceptible to the curse of dimensionality.
An alternative line of work uses Prescribed Performance Control (PPC) to enforce performance "funnels" on STL robustness semantics \cite{STL_funnel,liu2022compositional}. Although computationally efficient, these methods apply only to a fragment of STL specifications. More recently, a spatiotemporal tube (STT) based approach was proposed in \cite{das2025approximation}, providing an approximation-free feedback control law for general STL tasks. However, it is limited to fully actuated systems and relies on hyper-rectangular STTs that are computationally expensive and incompatible with smooth time-varying control barrier functions (TV-CBFs).

Recently, Control Barrier Functions (CBFs) \cite{CBF} have attracted considerable attention as an effective means of enforcing safety constraints with formal guarantees. 
They provide a computationally efficient framework for controller synthesis while ensuring forward invariance of a safe region in the state space. 
A natural and promising research direction has therefore been the integration of these two formalisms. \cite{CBF_STL_Belta} use CBFs to enforce STL specifications in continuous time by treating STL invariance requirements as safety constraints.
Similarly, \cite{CBF_STL} demonstrated how time-varying CBFs (TV-CBFs) can be constructed to under-approximate the predicate functions of the STL formula at selected time instances, and thus be used to satisfy STL tasks. 
These formulations provide an elegant, abstraction-free link between high-level logical specifications and low-level control actions. However, they remains limited to a fragment of STL and struggles to handle more complex formulas with nested temporal operators.

TV-CBFs \cite{TVCBF} are well suited for enforcing time-dependent safety constraints, and their connection to STL semantics, established in \cite{CBF_STL}, provides a promising foundation. Building on this insight, we aim to develop a framework that systematically handles the full class of STL tasks. The key limitation, therefore, lies not in the use of CBFs themselves, but in the lack of a synthesis methodology for general STL specifications. 
Recent works have explored learning-based strategies, framing CBF synthesis as a machine learning problem that leverages safe and unsafe data samples \cite{learn-CBF-1} or expert demonstrations \cite{learn-CBF-ED}. However, these data-driven methods often lack formal correctness guarantees and can be computationally expensive. 
Recent learning-based approaches attempt to learn CBFs from safe–unsafe samples \cite{learn-CBF-1} or expert demonstrations \cite{learn-CBF-ED}, but they lack formal guarantees, are computationally demanding, and do not address STL tasks.

In this paper, we propose a novel framework that addresses this synthesis challenge by leveraging the STT framework \cite{STT, das2025spatiotemporal}. 
The main contributions of this paper are threefold:
(i) given an STL specification, we formulate the synthesis of the corresponding STT as a robust optimization problem (ROP) by leveraging the STL robustness metric \cite{STL_robust};
(ii) we reformulate the ROP as a computationally feasible scenario optimization problem (SOP) by sampling points in time and state space, and provide formal guarantees that the resulting STT correctly captures the STL specification; and
(iii) we use the synthesized STT to construct a TV-CBF and show that, under a control law rendering this TV-CBF invariant, the system satisfies the full class of STL specifications.
A key technical innovation of our framework is the use of STT with spherical cross-sections, which, in contrast to our earlier hyper-rectangular formulation, significantly reduces computational complexity and allows smoother CBF construction. The effectiveness and scalability of the proposed method are demonstrated through case studies involving a differential-drive mobile robot and a quadrotor operating in a 3D environment. We also compare our approach against MILP-, MPC-, CBF-, and PPC-based methods, and the results highlight that our approach achieves faster computation, while handling more complex STL specifications.

\section{Preliminaries and Problem Formulation} \label{sec:prob}
\textit{Notations}: For $a,b\in\N$ with $a\leq b$, we denote the closed interval in $\N$ as $[a;b] := \{a, a+1, \ldots, b\}$. We define the ball $\B(\cen, \rad) = \{ x \in \R^n \mid \|x - \cen\| < \rad \}$, characterized by center $\cen \in \mathbb{R}^n$ and radius $\rad \in \mathbb{R}^+$. All other notation in this paper follows standard mathematical conventions.

\subsection{System Definition}
We consider a continuous-time, nonlinear, time-varying, control-affine system $\mathcal{S}$ defined by:
\begin{align}
    \mathcal{S}: \dot{x} = f(t,x) + g(t,x)u,
    \label{eqn:sysdyn}
\end{align}
where $ x(t) \in \X \subseteq \R^n $ is the system state, $u(t) \in \U \subseteq \R^m $ is the control input, $\X$ is a compact state space. $ f: \R_0^+ \times \X \rightarrow \R^n $ and $ g:\R_0^+ \times \X \rightarrow \R^{n \times m} $ are locally Lipschitz continuous.

\subsection{Signal Temporal Logic (STL)}
An STL \cite{STL_OdedMaler} formula is built from predicates $\mathsf{p}$, Boolean connectives, and temporal operators. Given a predicate function $h:\R^n \rightarrow \R$,  
$\mathsf{p} := {\true} \text{, if $h(s) \geq 0$, and} \ \mathsf{false} \text{ if $h(s) < 0$}.$
An STL formula $\phi$ is defined recursively as:
\begin{align}\label{eqn:stl}
    \phi := {\true} \ | \ \mathsf{p} \ | \ \neg \phi \ | \ \phi_1 \land \phi_2 \ | \  \phi_1\mathcal{U}_{[a, b]}\phi_2,
\end{align}
where $\mathsf{p}$ is a predicate, and $\phi_1,\phi_2$ are STL formulae. Here, $\neg$ and $\land$ denote logical negation and conjunction, and $\mathcal{U}_{[a,b]}$ is the \textit{until} operator over the time interval $[a,b]$ with $0 \leq a \leq b < \infty$. The \textit{disjunction}, \textit{eventually}, and \textit{always} operators can be respectively derived as $\phi_1 \lor \phi_2 := \neg (\neg \phi_1 \land \neg \phi_2)$, $\F_{[a,b]}\phi := \true \ \mathcal{U}_{[a,b]} \phi$, and $\G_{[a,b]}\phi := \neg \F_{[a,b]} \neg \phi$.

The satisfaction relation $(x,t)\models\phi$ denotes if a signal $x: \R_{\geq 0} \rightarrow \R^n$, possibly a solution of \eqref{eqn:sysdyn}, satisfies an STL formula $\phi$ starting from time $t$. The STL semantics \cite{STL_OdedMaler} for a signal $x$ is recursively given by:
\begin{align*}
    &(x,t)\models\mathsf{p} &&\Leftrightarrow \  h(x(t))\ge0, \\
    &(x,t)\models\neg\phi &&\Leftrightarrow \  \neg((x,t)\models\phi), \\
    & (x,t)\models\phi_1\land\phi_2 &&\Leftrightarrow \  (x,t)\models\phi_1\land (x,t) \models\phi_2, \\
    &(x,t)\models \phi_1{\mathcal{U}}_{[a, b]}\phi_2 &&\Leftrightarrow \  \exists t_1 \in [t+a,t+b], (x,t_1)\models\phi_1 \\
    & && \quad \ \ \land \forall t_2 \in [t+a,t_1], (x,t_2) \models \phi_2.
\end{align*}
A signal $x$ satisfies $\phi$, denoted by $x \models \phi$, iff $(x,0) \models \phi$. 

The robustness metric $\rho^{\phi}(x,t)$ \cite{STL_robust} quantifies the degree of satisfaction or violation of $\phi$. $\rho^{\phi}(x,t)\geq 0$ implies $(x,t)\models \phi$, and larger values correspond to stronger satisfaction. It is defined recursively as:
\begin{subequations}\label{eqn:stl_rho}
\begin{align}
    &\rho^{\mathsf{p}}(x,t)= h (x(t)), \label{eqn:stta}\\
    &\rho^{\neg\phi} (x,t)= -\rho^{\phi} (x,t), \\
    &\rho^{\phi_1\land\phi2} (x,t)= \min \big(\rho^{\phi_1} (x,t), \rho^{\phi_2} (x,t)\big),\\
    &\rho^{\phi_1\lor\phi2} (x,t)= \max \big(\rho^{\phi_1} (x,t), \rho^{\phi_2} (x,t)\big),\\
    &\rho^{{\F}_{[a, b]}\phi} (x,t)= \max_{t_1 \in [t+a,t+b]} \rho^{\phi} (x,t_1),\\
    &\rho^{{\G}_{[a, b]}\phi} (x,t)= \min_{t_1 \in [t+a,t+b]} \rho^{\phi} (x,t_1),\\
    &\rho^{\phi_1{\mathcal{U}}_{[a, b]}\phi_2} (x,t) = \max _{t_1 \in [t+a, t+b]} \min \big( \rho^{\phi_1}(x,t_1), \notag \\
    & \qquad \qquad \qquad \qquad \min_{t_2 \in [t+a,t_1]} \rho^{\phi_2} (x,t_2)\big).
    \label{eqn:stte}
\end{align}
\end{subequations}
For simplicity, we denote the robustness at $t = 0$ by $\rho^{\phi}(x)$. We consider a finite time horizon $[0,t_f]$ over which the specification $\phi$ is to be realized, i.e., the signal $x:[0,t_f] \rightarrow \R^n$ is such that $x \models \phi$.

\subsection{Time-varying Control Barrier Functions}
\label{subsec:desc}
Time-varying control barrier functions (TV-CBFs) offer a natural way to enforce time-dependent constraints, such as those specified by Signal Temporal Logic (STL) specifications. Following the framework in \cite{TVCBF}, a TV-CBF is a differentiable function $b: {\left[0, t_f\right]} \times \X \rightarrow \R$, where $\X \subset \R^n$ is the state space. The corresponding time-varying safe set $\Cs(t)$ is given by the 0-superlevel set of $b(t,x)$ as
\begin{subequations}\label{eq:safeset}
\begin{align}
    \Cs(t) &= \{x \in \X | b(t,x) \geq 0\} \\
    \partial\Cs(t) &= \{x \in \X | b(t,x) = 0\} \\
    \text{Int}(\Cs(t)) &= \{x \in \X | b(t,x) > 0\},
\end{align}
\end{subequations}
where $\partial \Cs$ and Int$(\Cs)$ are the boundary and interior of $\Cs$.

Now, $b(t,x)$ is considered a candidate TV-CBF if $\Cs(t)$ is non-empty for the entire time horizon. A trajectory $x:[0, t_f] \rightarrow \R^n$ with $ x(t) \in \Cs(t) $ for all $ t \in \left[0, t_f\right] $ ensures this.
\begin{definition}[Candidate Control Barrier Function \cite{CBF_STL}]
A differentiable function $ b : \left[0, t_f\right] \times \X \rightarrow \R $ is called a candidate control barrier function if, for any $x_0 \in \Cs(0)$, there exists an absolutely continuous trajectory $x:[0, t_f] \rightarrow \R^n $ with $ x(0) := x_0 $ such that $x(t) \in \Cs(t) $ for all $ t \in \left[0, t_f\right] $.
\end{definition}

We now define the notion of a valid control barrier function.
\begin{definition}[Valid Control Barrier Function]
    Consider the control-affine system $\mathcal{S}$ in \eqref{eqn:sysdyn} and the time-varying safe set $\Cs(t) \subseteq \X$ defined in \eqref{eq:safeset}. The function $b: [0, t_f] \times \X \rightarrow \R$ is a valid control barrier function if there exists a locally Lipschitz class-$\mathcal{K}$ function 
    \footnote{A class $\mathcal{K}$ function is a continuous, strictly increasing function $\alpha: [0, a) \to [0, \infty)$ with $\alpha(0) = 0$.}
    $\alpha: \R_0^+ \rightarrow \R$ such that, for all $(t,x) \in [0, t_f] \times \Cs(t)$,
    \begin{equation*}
        \sup_{u} \Big( \tfrac{\partial b(t,x)}{\partial x}^\top (f(t,x) + g(t,x)u) + \tfrac{\partial b(t,x)}{\partial t} \Big) \geq -\alpha(b(t,x)).
    \end{equation*}   
\end{definition}

The set of all safety-preserving control inputs at any time $t$ and in any state $x$ is defined as 
$\U_a(t,x) := \big\{ u \in \U \mid \tfrac{\partial b(t,x)}{\partial x}^\top (f(t,x) + g(t,x)u) + \tfrac{\partial b(t,x)}{\partial t} \geq -\alpha(b(t,x)) \big\}.$
A candidate CBF becomes a valid CBF if it guarantees the forward invariance of $\Cs(t)$ under the chosen control law.

\begin{definition}[Forward Invariance \cite{CBF_STL}]
A set $\Cs(t)$ $\subseteq \X$ is said to be forward invariant for the system \eqref{eqn:sysdyn}, under a control law $u(t,x)$, if for every initial condition $x(0) \in \Cs(0)$, there exists a unique trajectory $x:[0,t_f] \rightarrow \R^n$ with $x(0) = x_0$ satisfying $x(t) \in \Cs(t)$ for all $ t \in \left[0, t_f\right] $.
\end{definition}

The fundamental result connecting TV-CBFs to system safety is summarized in the following theorem. 
\begin{theorem}[\cite{glotfelter2017nonsmooth}]\label{thm:tvcbf}
    If $u(t,x) \in \U_a(t,x)$ is locally Lipschitz continuous in $x$ and piecewise continuous in $t$ and the (consequently) unique solutions to \eqref{eqn:sysdyn} are defined over the time interval $[0, t_f]$. Then the set $\Cs(t)$ is forward invariant for the control law $u(t,x)$ if $b(t, x)$ is a valid control barrier function.
\end{theorem}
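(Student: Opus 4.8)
The plan is to reduce forward invariance to a scalar differential inequality along trajectories and then close the argument with the comparison lemma. Fix an initial condition $x_0 \in \Cs(0)$, so that $b(0,x_0) \geq 0$, and let $x:[0,t_f]\to\R^n$ be the unique absolutely continuous solution of \eqref{eqn:sysdyn} under the control law $u(t,x)$, whose existence and uniqueness over $[0,t_f]$ are granted by hypothesis. The goal is to show that $b(t,x(t)) \geq 0$ for all $t \in [0,t_f]$, since this is precisely the statement $x(t) \in \Cs(t)$, i.e., forward invariance of $\Cs(t)$.

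First I would derive the governing differential inequality. Because $b$ is differentiable and $x(\cdot)$ is absolutely continuous, the composition $t \mapsto b(t,x(t))$ is absolutely continuous, and the chain rule gives, for almost every $t$, the derivative $\tfrac{\partial b}{\partial x}^\top \dot{x} + \tfrac{\partial b}{\partial t} = \tfrac{\partial b}{\partial x}^\top (f(t,x) + g(t,x)u(t,x)) + \tfrac{\partial b}{\partial t}$. Since $u(t,x(t)) \in \U_a(t,x(t))$ by assumption, the defining inequality of $\U_a(t,x)$ yields $\dot{b}(t,x(t)) \geq -\alpha(b(t,x(t)))$ for almost every $t \in [0,t_f]$, where $\alpha$ is the locally Lipschitz class-$\mathcal{K}$ function from the definition of a valid control barrier function.

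Next I would invoke the comparison lemma. Consider the scalar initial value problem $\dot{y} = -\alpha(y)$ with $y(0) = b(0,x_0) \geq 0$; since $\alpha$ is locally Lipschitz, this problem admits a unique solution $y(\cdot)$, and the comparison principle gives $b(t,x(t)) \geq y(t)$ for all $t$. It then remains to show $y(t) \geq 0$: because $\alpha$ is class-$\mathcal{K}$ we have $\alpha(0)=0$, so $y \equiv 0$ solves the ODE from zero initial data, and by uniqueness the trajectory starting at $y(0) \geq 0$ cannot cross this equilibrium from above, whence $y(t) \geq 0$. Chaining the two bounds gives $b(t,x(t)) \geq y(t) \geq 0$ on $[0,t_f]$, establishing $x(t) \in \Cs(t)$ and therefore forward invariance.

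I expect the main obstacle to lie in justifying the comparison step in the merely absolutely continuous (rather than continuously differentiable) setting. Because $u$ is only piecewise continuous in $t$, the derivative $\dot{b}(t,x(t))$ exists only almost everywhere, so the differential inequality must be read in the almost-everywhere sense and the comparison lemma applied in its version for absolutely continuous functions. The care needed is to confirm that the stated regularity of $u$—locally Lipschitz in $x$ and piecewise continuous in $t$—indeed produces an absolutely continuous $x(\cdot)$ with a well-defined almost-everywhere derivative of $b(t,x(t))$, so that the smooth argument above applies directly rather than requiring the nonsmooth Clarke-gradient generalization of \cite{glotfelter2017nonsmooth}.
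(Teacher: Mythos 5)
The paper does not actually prove this statement: Theorem~\ref{thm:tvcbf} is imported verbatim from \cite{glotfelter2017nonsmooth} and used as a black box, so there is no in-paper proof to compare against. Your argument is the standard smooth-CBF invariance proof (chain rule along the trajectory, then the comparison lemma against $\dot{y}=-\alpha(y)$), which is the right skeleton and is indeed how the cited result is established in the smooth special case.

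There is, however, one genuine gap, and it is not the regularity issue you flag at the end but a circularity in the second step. The definition of a valid control barrier function only guarantees the inequality $\tfrac{\partial b}{\partial x}^\top(f+gu)+\tfrac{\partial b}{\partial t}\geq-\alpha(b)$ for $(t,x)$ with $x\in\Cs(t)$, and $\alpha$ is only defined on $\R_0^+$, so the right-hand side $-\alpha(b(t,x(t)))$ is not even meaningful where $b(t,x(t))<0$. You therefore cannot assert $\dot{b}(t,x(t))\geq-\alpha(b(t,x(t)))$ ``for almost every $t\in[0,t_f]$'' before knowing that the trajectory remains in $\Cs(t)$ --- which is the conclusion you are trying to prove. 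The standard repair is a first-crossing-time argument: let $t^*=\inf\{t: b(t,x(t))<0\}$; on $[0,t^*]$ the inequality is legitimately available, the comparison lemma gives $b(t^*,x(t^*))\geq y(t^*)$, and if $b(0,x_0)>0$ then uniqueness of solutions of $\dot{y}=-\alpha(y)$ forces $y(t^*)>0$, contradicting $b(t^*,x(t^*))=0$. The boundary case $b(0,x_0)=0$ is not closed by this and needs either an extended class-$\mathcal{K}$ function defined on negative arguments (so the comparison holds on a neighborhood of $\Cs(t)$) or a Nagumo-type/limiting argument; this residual case, together with the merely absolutely continuous setting you correctly identify, is precisely where the nonsmooth machinery of \cite{glotfelter2017nonsmooth} does its work.
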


\subsection{Problem Formulation}

However, given an STL task, designing a time-varying CBF $b(t,x)$ is particularly challenging. We now formally define the problem considered in this work.

\begin{problem}\label{prob1}
   Given the system $\mathcal{S}$ in \eqref{eqn:sysdyn} and an STL specification $\phi$ over time-interval $[0,t_f]$, defined in \eqref{eqn:stl}, the goal is to design a time-varying control barrier function, $b: \left[0, t_f\right] \times \X \rightarrow \R$, and derive a control $u(t,x)$ such that the controlled system trajectory $x$ satisfies the specification $x \models \phi$. 
\end{problem}

\section{Spatiotemporal Tubes}
To satisfy the given STL specification, we leverage the Spatiotemporal Tube (STT) framework. STTs act as dynamic structures in the state space that can effectively capture the STL task \cite{das2025approximation}. 

\begin{definition}[STT for STL Specification]\label{def:stt}
Consider an STL task $\phi$ over the time interval $[0, t_f]$. A time-varying region 
$$\Gamma(t) = \B\big(\cen(t),\rad(t)\big)$$
characterized by continuously differentiable centre $\cen:\R_0^+\rightarrow\R^n$ and radius $\rad:\R_0^+\rightarrow\R^+$, is a valid STT for the STL specification, if the following hold:
\begin{subequations}
\begin{align}\label{eqn:stt_stl}
    &\rad(\tau) > 0, \forall \tau \in [0, t_f], \\
    &\rho^\phi (x) > 0, \forall x: [0,t_f] \rightarrow \R^n, \text{ s.t. } x(\tau) \in \Gamma(\tau), \forall \tau \in [0, t_f].
\end{align}
\end{subequations}
Thus, the radius remains strictly positive and the robustness of the STL formula is positive for all trajectories inside the tube.
\end{definition}
Now, if the system is constrained within the STT:
\begin{align} \label{eqn:stt_constrain}
    x(\tau) \in \Gamma (\tau), \forall \tau \in [0, t_f],
\end{align}
then the STL specification $\phi$ is guaranteed to be satisfied.

\subsection{Construction of Spatiotemporal Tubes}\label{sec:sampling_based}
The goal in this section is to construct STT that guarantee satisfaction of the STL specification. We begin by fixing the functional form of the STT 
$$\Gamma(q_\cen, q_\rad, t) = \B(\cen(q_\cen, t), \rad(q_\rad, t))$$
with the center $\cen(q_\cen, t) = [\cen_1(q_{\cen,1}, t), \ldots, \cen_n(q_{\cen,n}, t)]^\top \in \R^n$ and the radius $\rad(q_\rad, t) \in \R$ defined as
\begin{align*}
    \cen_{i}(q_{\cen,i},t) &= \sum_{k=0}^{z_{i}} q_{\cen,i}^{(k)} p_{\cen,i}^{(k)}(t), \ i\in [1;n], \\
    \rad(q_\rad,t) &= \sum_{k=0}^{z_{\rad}} q_{\rad}^{(k)} p_{\rad}^{(k)}(t).
\end{align*}
Here, $z_i$ and $z_\rad$ are the number of coefficients, $p_{\cen,i}(t)=[p_{\cen,i}^{(1)}, ..., p_{\cen,i}^{(z_i)}]^\top$, $p_\rad(t)=[p_{\rad}^{(1)}, ..., p_{\rad}^{(z_\rad)}]^\top$ are user-defined continuously differentiable basis functions and $q_\cen = [q_{\cen,1}, \ldots, q_{\cen,n}]^\top \in \mathbb{R}^{nz_i}$ with $q_{\cen,i} = [q_{\cen,i}^{(1)}, ..., q_{\cen,i}^{(z_i)}]^\top \in \mathbb{R}^{z_i}$ and $q_\rad = [q_\rad^{(1)}, ..., q_\rad^{(z_\rad)}] \in \mathbb{R}^{z_\rad}$ denote the unknown coefficients.

$p_{\cen,i}(t)$ and $p_\rad(t)$ are user-defined continuously differentiable basis functions, and $q_\cen = [q_{\cen,1}, \ldots, q_{\cen,n}]^\top \in \R^n$, $q_\rad \in \R$ denote unknown coefficients.

To enforce the conditions in Definition~\ref{def:stt}, we formulate the following Robust optimization problem (ROP):
\begin{subequations} \label{eq:ROP}
\begin{align}
& \min_{[q_\cen, q_\rad, \eta]} \quad \eta, \quad  \textrm{s.t.}  \notag \\
& \forall \tau \in [0, t_f]: \ -\rad(q_\rad, \tau) + \rad_d \leq \eta, \label{subeqn:ROP1} \\
&  \forall x: [0,t_f] \rightarrow \R^n, \text{s. t., } x(\tau) = \cen(q_\cen, \tau) + \lambda\rad(q_\rad, \tau)s(\theta), \notag \\ 
& \hspace{2.5 cm} \forall (\theta,\lambda,\tau) \in [0,2\pi]^n \times [0,1] \times [0, t_f], \notag \\
& -\rho^{\phi} \big(x\big) \leq \eta, 
\label{subeqn:ROP3}
\end{align}
\end{subequations}
where 
\begin{align*}
    s(\theta) = &[\cos\theta_1, \sin\theta_1\cos\theta_2, \sin\theta_1\sin\theta_2\cos\theta_3, \ldots, \\
    &\quad \sin\theta_1\sin\theta_2\ldots\sin\theta_{n-1}\cos\theta_{n-1}, \\
    &\quad \sin\theta_1\sin\theta_2\ldots\sin\theta_{n-1}\sin\theta_{n-1}]^\top,
\end{align*}
parameterizes the unit sphere, and $\rad_{d} \in \R^+$ is a user-defined lower bound on the tube radius. 
\begin{remark}\label{rem:stt_constraints}
The condition \eqref{subeqn:ROP1} guarantees that the tube radius never shrinks to zero, ensuring a feasible safe set of non-zero volume at all times.  
The constraint in \eqref{subeqn:ROP3} enforces that the STL robustness remains positive inside the tube. Here, $s(\theta)$ covers all directions on the unit sphere, while $\lambda \in [0,1]$ scales radially outward from the center to the boundary. Together, they span the entire ball $\B(\cen(\tau), \rad(\tau))$, ensuring that any trajectory confined within the STT automatically satisfies the STL specification.
\end{remark}

It is straightforward to see that if the optimal solution $\eta^*$ to the ROP satisfies $\eta^* \leq 0$, then the conditions in Definition~\ref{def:stt} are guaranteed to hold.

The main difficulty, however, lies in the fact that the ROP in \eqref{eq:ROP} contains infinitely many constraints defined over continuous time and space. To make the problem tractable, we adopt a sampling-based approach for constructing the tube.

We introduce the augmented set $\W = [0,2\pi]^n \times [0,1] \times [0, t_f]$ and collect $N$ samples $w_s = (\theta_s, \lambda_s, \tau_s)$, $s \in [1;N]$. For each $w_s$, we define a ball $\B(w_s,\epsilon)$ of radius $\epsilon$, such that for all the points $(\theta, \lambda, \tau) \in \W$, there exists a $w_s$ that satisfies 
\begin{align}\label{eq:ball}
    \| (\theta, \lambda, \tau) - w_s \| \leq \epsilon. 
\end{align}  
The union of all such balls then forms a cover of the augmented set, i.e., $\bigcup_{s=1}^{N} \B(w_s,\epsilon) \supset \W$.

Based on this construction, we reformulate the ROP as the following scenario optimization problem (SOP):
\begin{subequations} \label{eq:SOP}
\begin{align}
& \min_{[q_\cen, q_\rad, \eta]} \quad \eta, \quad  \textrm{s.t.}  \notag \\
& \forall \tau_s \in [0, t_f]: \ -\rad(q_\rad, \tau_s) + \rad_d \leq \eta \label{subeqn:SOP1} \\
&  \forall x: [0,t_f] \rightarrow \R^n, \text{s. t., } x(\tau_s) = \cen(q_\cen, \tau_s) + \lambda_s\rad(q_\rad, \tau_s)s(\theta_s), \notag \\ 
& \hspace{2.5 cm} \forall (\theta_s,\lambda_s,\tau_s) \in [0,2\pi]^n \times [0,1] \times [0, t_f], \notag \\
& -\rho^{\phi} \big(x\big) \leq \eta.
\label{subeqn:SOP3}
\end{align}
\end{subequations}

The SOP \eqref{eq:SOP} now involves only finitely many constraints, while retaining the same structure as the original ROP \eqref{eq:ROP}. To ensure that the STT obtained from the SOP remain valid solutions of the ROP over continuous time and space, we impose the following assumption:
\begin{assumption} \label{assum:funlip}
    The functions $\cen(q_{\cen},t)$ and $\rad(q_{\rad},t)$ are Lipschitz continuous in $t$, with constants $\mathcal{L}_{\cen}$ and $\mathcal{L}_{\rad}$, respectively.
\end{assumption}

From \cite{STL_Lip}, we know that the STL robustness function $\rho^\phi(x)$ is Lipschitz continuous with respect to the signal, i.e., there exists $\mathcal{L}_\rho \in \R^+$ such that $|\rho^\phi(x)-\rho^\phi(y)| \leq \mathcal{L}_\rho \|x-y\|_\infty$ for all $x,y:[0,t_f] \rightarrow \R^n$. 
\begin{lemma}\label{thm:lip_rob}
    $\mu(\theta, \lambda)
    := -\rho^{\phi} ( x )$ with $x(\tau) = \cen(q_\cen, \tau) + \lambda \rad(q_\rad, \tau)s(\theta)$ for all $\tau \in [0,t_f]$, is Lipschitz continuous with the Lipschitz constant $\mathcal{L}_\mu \in \R^+$. 
    That is, for all $ (\theta, \lambda),(\hat{\theta}, \hat{\lambda}) \in [0,2\pi]^n \times [0,1]$, 
    $$\|\mu(\theta, \lambda)-\mu(\hat\theta, \hat\lambda)\| \leq \mathcal{L}_\mu \| (\theta, \lambda) - (\hat\theta, \hat\lambda) \|.$$
\end{lemma}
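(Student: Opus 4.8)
The plan is to chain the Lipschitz continuity of the STL robustness $\rho^\phi$ in its signal argument with an explicit Lipschitz estimate for the map $(\theta,\lambda)\mapsto \lambda s(\theta)$. Since $\mu$ is scalar-valued, $\|\mu(\theta,\lambda)-\mu(\hat\theta,\hat\lambda)\| = |\rho^\phi(x)-\rho^\phi(y)|$, where $x(\tau)=\cen(q_\cen,\tau)+\lambda\rad(q_\rad,\tau)s(\theta)$ and $y(\tau)=\cen(q_\cen,\tau)+\hat\lambda\rad(q_\rad,\tau)s(\hat\theta)$. Applying the robustness Lipschitz bound recalled just above the lemma gives $|\rho^\phi(x)-\rho^\phi(y)|\le \mathcal{L}_\rho\,\|x-y\|_\infty$, so it remains to control $\|x-y\|_\infty$ in terms of $\|(\theta,\lambda)-(\hat\theta,\hat\lambda)\|$.

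First I would compute the pointwise difference of the two signals. The center term is independent of $(\theta,\lambda)$ and cancels, leaving $x(\tau)-y(\tau)=\rad(q_\rad,\tau)\big(\lambda s(\theta)-\hat\lambda s(\hat\theta)\big)$. Taking the supremum over $\tau\in[0,t_f]$ and noting that the bracketed term is constant in $\tau$ yields $\|x-y\|_\infty \le \bar\rad\,\|\lambda s(\theta)-\hat\lambda s(\hat\theta)\|$, where $\bar\rad := \sup_{\tau\in[0,t_f]}|\rad(q_\rad,\tau)|$ is finite because $\rad(q_\rad,\cdot)$ is continuous (Assumption~\ref{assum:funlip}) on the compact interval $[0,t_f]$.

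Next I would bound the geometric factor. Writing $\lambda s(\theta)-\hat\lambda s(\hat\theta)=\lambda\big(s(\theta)-s(\hat\theta)\big)+(\lambda-\hat\lambda)s(\hat\theta)$ and using $|\lambda|\le 1$ together with $\|s(\hat\theta)\|=1$ gives $\|\lambda s(\theta)-\hat\lambda s(\hat\theta)\| \le \|s(\theta)-s(\hat\theta)\| + |\lambda-\hat\lambda|$. The entries of $s(\theta)$ are finite products of sines and cosines, so $s$ is $C^1$ with Jacobian bounded on the compact box $[0,2\pi]^n$; by the mean value inequality $s$ is Lipschitz there, say $\|s(\theta)-s(\hat\theta)\|\le \mathcal{L}_s\|\theta-\hat\theta\|$. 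Combining these estimates bounds $\|\lambda s(\theta)-\hat\lambda s(\hat\theta)\|$ by a constant multiple of $\|(\theta,\lambda)-(\hat\theta,\hat\lambda)\|$, and multiplying through by $\mathcal{L}_\rho \bar\rad$ produces the claimed constant $\mathcal{L}_\mu = \mathcal{L}_\rho\,\bar\rad\,\max\{\mathcal{L}_s,1\}$ (after absorbing norm-equivalence constants).

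The only genuinely nontrivial step is the Lipschitz estimate for the spherical parameterization $s$: one must verify that its Jacobian is uniformly bounded on $[0,2\pi]^n$, which follows from each coordinate being a bounded product of trigonometric functions whose partial derivatives are again such products, hence bounded by $1$ in absolute value. Everything else is bookkeeping — the cancellation of the center and the factoring of the radius are immediate, and the finiteness of $\bar\rad$ is a routine compactness argument. I would also flag the minor care needed with the signal norm $\|\cdot\|_\infty$: since the difference $\lambda s(\theta)-\hat\lambda s(\hat\theta)$ is constant in $\tau$, the supremum over time reduces cleanly to a single spatial norm, so no subtlety arises there beyond fixing the norm convention.
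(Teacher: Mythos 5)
Your proposal is correct and follows essentially the same route as the paper's appendix proof: both reduce to the Lipschitz continuity of $\rho^\phi$ in the signal, observe that the center cancels and the radius is bounded by $\overline{\rad}$ on the compact horizon, and split $\lambda s(\theta)-\hat\lambda s(\hat\theta)$ via the triangle inequality using $|\lambda|\leq 1$, $\|s\|\leq 1$, and a Lipschitz bound $\mathcal{L}_s$ for the spherical parameterization. The only difference is cosmetic --- the paper performs the split as two one-variable Lipschitz estimates chained through $(\hat\theta,\lambda)$ and closes with Cauchy--Schwarz to get $\mathcal{L}_\mu=\mathcal{L}_\rho\overline{\rad}\sqrt{\mathcal{L}_s^2+1}$, whereas your constant $\mathcal{L}_\rho\overline{\rad}\max\{\mathcal{L}_s,1\}$ differs by an absorbed norm-equivalence factor.
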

\begin{proof}
    See Appendix.
\end{proof}

In Theorem \ref{th:constr}, we show that even though the SOP \eqref{eq:SOP} is solved using a finite set of sampled points ${(\theta_s, \lambda_s, \tau_s)}_{s=1}^N$, the Lipschitz continuity of the relevant functions ensures that the obtained solution generalizes to the entire continuous domain $[0, t_f]$ of the ROP \eqref{eq:ROP}.

\begin{figure*}[!ht]
    \centering
    \includegraphics[width=\textwidth]{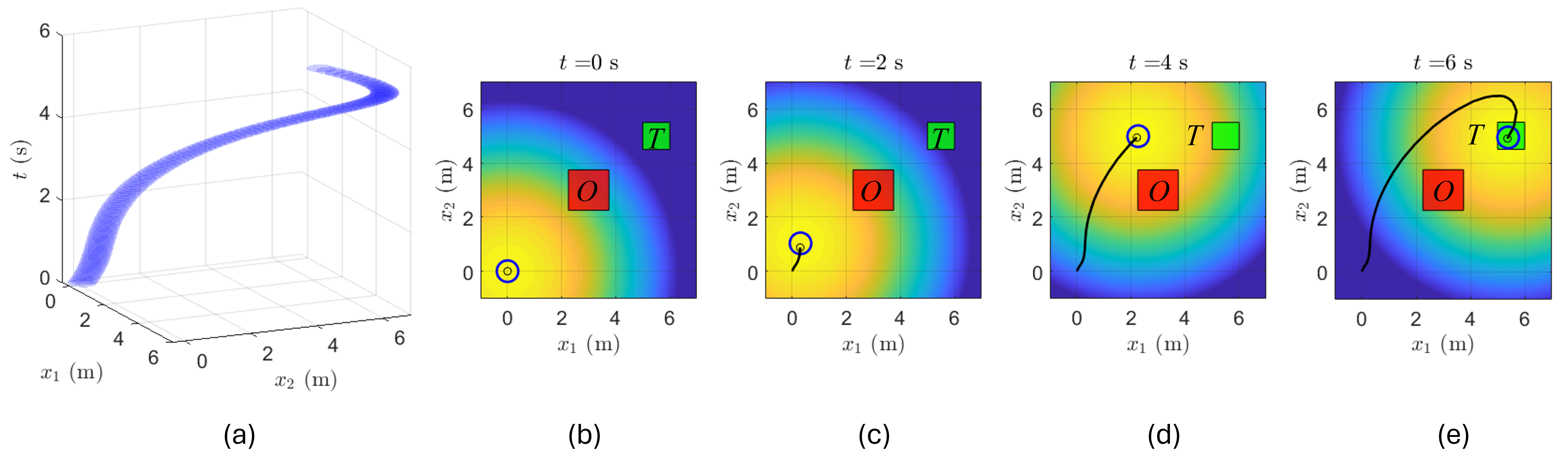}
    \caption{An omnidirectional mobile robot satisfying the STL task $\F_{[0,6]} T \land \G_{[0,6]} \neg O$, with the target $T$ (green) and obstacle $O$ (red). (a) The constructed STT (blue). (b)-(d) The 0-superlevel set of the TV-CBF (blue circle) and the controlled system trajectory (black line) at different time instances.}
    \label{fig:bf}
\end{figure*}

\begin{theorem} \label{th:constr}
    Consider the SOP in (\ref{eq:SOP}), solved using $N$ sampled points defined in \eqref{eq:ball}. Let $[q_\cen^*, q_\rad^*, \eta^*]$ be its optimal solution of SOP. If the following holds with $\mathcal{L} = \max\{\mathcal{L}_{\rad}, \sqrt{\mathcal{L}_\mu^2 + \mathcal{L}_\rho^2(\mathcal{L}_\cen+\mathcal{L}_\rad)^2}\}$: 
    \begin{equation} \label{eq:satisfy}
        \eta^* + \mathcal{L}\epsilon \leq 0,
    \end{equation}
    then the STT obtained from the SOP solution
    $$\Gamma(q_\cen^*, q_\rad^*, t) = \B(\cen(q_\cen^*, t), \rad(q_\rad^*, t)),$$
    satisfies all the conditions of Definition \ref{def:stt}. 
\end{theorem}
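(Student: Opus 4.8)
The plan is to certify the two items of Definition~\ref{def:stt} separately, in each case lifting a finite family of sampled constraints of the SOP~\eqref{eq:SOP} to the whole continuous domain $\W$ by a covering argument. The mechanism is uniform: by~\eqref{eq:ball} every point of $\W$ lies within $\epsilon$ of some sample $w_s=(\theta_s,\lambda_s,\tau_s)$, so a Lipschitz bound on the relevant constraint function transfers the sampled inequality $(\cdot)\le\eta^*$ at $w_s$ to $(\cdot)\le\eta^*+\mathcal{L}\epsilon\le 0$ at an arbitrary point, where the final step invokes the hypothesis~\eqref{eq:satisfy}. Taking $\mathcal{L}=\max\{\mathcal{L}_\rad,\sqrt{\mathcal{L}_\mu^2+\mathcal{L}_\rho^2(\mathcal{L}_\cen+\mathcal{L}_\rad)^2}\}$ makes a single constant, and hence a single cover, valid for both constraint families at once.

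First I would treat the radius condition. Fixing $\tau\in[0,t_f]$ and choosing via~\eqref{eq:ball} a sample whose time component satisfies $|\tau-\tau_s|\le\epsilon$, Assumption~\ref{assum:funlip} gives $|\rad(q_\rad^*,\tau)-\rad(q_\rad^*,\tau_s)|\le\mathcal{L}_\rad\epsilon$, and combining this with the sampled constraint~\eqref{subeqn:SOP1} yields $-\rad(q_\rad^*,\tau)+\rad_d\le\eta^*+\mathcal{L}_\rad\epsilon\le\eta^*+\mathcal{L}\epsilon\le 0$. Hence $\rad(q_\rad^*,\tau)\ge\rad_d>0$ for all $\tau$, which is the first item of Definition~\ref{def:stt}, with strictness inherited directly from $\rad_d>0$.

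The core of the argument is the robustness condition. For an arbitrary $w=(\theta,\lambda,\tau)\in\W$ I would pick $w_s$ with $\|w-w_s\|\le\epsilon$ (so that both $\|(\theta,\lambda)-(\theta_s,\lambda_s)\|\le\epsilon$ and $|\tau-\tau_s|\le\epsilon$) and compare the robustness constraint at $w$ and $w_s$ by splitting the perturbation into a spatial and a temporal part. Moving $(\theta,\lambda)\mapsto(\theta_s,\lambda_s)$ changes $-\rho^\phi$ by at most $\mathcal{L}_\mu\|(\theta,\lambda)-(\theta_s,\lambda_s)\|$ by Lemma~\ref{thm:lip_rob}; the time argument enters through the tube point $\cen(q_\cen^*,\tau)+\lambda\rad(q_\rad^*,\tau)s(\theta)$, whose dependence on $\tau$ is $(\mathcal{L}_\cen+\mathcal{L}_\rad)$-Lipschitz by Assumption~\ref{assum:funlip} (using $\|s(\theta)\|=1$ and $\lambda\le 1$), and propagating this through the signal-Lipschitzness of $\rho^\phi$ bounds the associated change by $\mathcal{L}_\rho(\mathcal{L}_\cen+\mathcal{L}_\rad)|\tau-\tau_s|$. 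Adding the two contributions and applying Cauchy--Schwarz in the joint variable gives a total change of at most $\sqrt{\mathcal{L}_\mu^2+\mathcal{L}_\rho^2(\mathcal{L}_\cen+\mathcal{L}_\rad)^2}\,\|w-w_s\|\le\mathcal{L}\epsilon$, so together with~\eqref{subeqn:SOP3} one obtains $-\rho^\phi(x)\le\eta^*+\mathcal{L}\epsilon\le 0$, i.e.\ $\rho^\phi(x)\ge 0$, establishing the second item of Definition~\ref{def:stt}.

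The step I expect to be the main obstacle is precisely this joint Lipschitz estimate for the robustness constraint: the spatial bound of Lemma~\ref{thm:lip_rob} and the temporal bound from $\mathcal{L}_\cen,\mathcal{L}_\rad,\mathcal{L}_\rho$ must be merged in the correct (Euclidean) norm so that they combine into $\sqrt{\mathcal{L}_\mu^2+\mathcal{L}_\rho^2(\mathcal{L}_\cen+\mathcal{L}_\rad)^2}$ rather than a looser sum, and one must argue that certifying the parameterized family $x(\cdot)=\cen(q_\cen^*,\cdot)+\lambda\rad(q_\rad^*,\cdot)s(\theta)$ over $\W$ suffices to guarantee positive robustness for \emph{every} trajectory confined to $\Gamma(q_\cen^*,q_\rad^*,\cdot)$, as demanded by Definition~\ref{def:stt}. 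Upgrading the non-strict $\rho^\phi\ge 0$ to the strict inequality of the definition is a minor point, handled by the strict slack available in~\eqref{eq:satisfy} together with compactness of $\W$.
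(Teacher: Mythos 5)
Your proposal is correct and follows essentially the same route as the paper's proof: the same covering argument via \eqref{eq:ball}, the same Lipschitz transfer for the radius constraint, and the same decomposition of the robustness perturbation into a spatial part (bounded by $\mathcal{L}_\mu$ via Lemma~\ref{thm:lip_rob}) and a temporal part (bounded by $\mathcal{L}_\rho(\mathcal{L}_\cen+\mathcal{L}_\rad)$), combined by Cauchy--Schwarz into the constant $\sqrt{\mathcal{L}_\mu^2+\mathcal{L}_\rho^2(\mathcal{L}_\cen+\mathcal{L}_\rad)^2}$. The two issues you flag at the end (strictness of $\rho^\phi>0$, and whether the parameterized family of trajectories suffices to cover all trajectories confined to the tube) are likewise left implicit in the paper's own proof, so your treatment matches it.
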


\begin{proof}\label{proof:constr}
This proof shows that under condition \eqref{eq:satisfy}, the STT, $\Gamma(q_\cen^*, q_\rad^*, t) = \B(\cen(q_\cen^*, t), \rad(q_\rad^*, t))$, obtained by solving the SOP in \eqref{eq:SOP}, satisfies all the conditions in Definition~\ref{def:stt}.

From \eqref{eq:ball}, for every $\tau \in [0, t_f]$, there exists a sampled point $\tau_s$ such that $|\tau - \tau_s| \leq \epsilon$. Therefore, for all $\tau \in [0, t_f]$:
\begin{align*}
    &-\rad(q_\rad^*, \tau) + \rad_d = -\rad(q_\rad^*, \tau_s) + \rad_d + \rad(q_\rad^*, \tau_s) - \rad(q_\rad^*, \tau) \\
    &\leq \ \eta^* + \mathcal{L}_\rad |\tau-\tau_s| \leq \eta^* + \mathcal{L}_\rad\varepsilon \leq \eta^* + \mathcal{L}\epsilon \leq 0.
\end{align*}
This implies that $\rad(q_\rad^*, \tau) \geq \rad_d > 0$ for all $\tau \in [0, t_f]$.

Similarly, from \eqref{eq:ball}, for every $(\theta, \lambda, \tau) \in [0,2\pi]^n \times [0,1] \times [0,t_f]$, there exists $(\theta_s, \lambda_s, \tau_s)$ such that $\|(\theta, \lambda, \tau) - (\theta_s, \lambda_s, \tau_s)\| \leq \epsilon$. Define the trajectories, $x, x_s: [0,t_f] \rightarrow \R^n$ as $x(\tau) = \cen(q_\cen^*, \tau) + \lambda \rad(q_\rad^*, \tau)s(\theta)$, $x_s(\tau_s) = \cen(q_\cen^*, \tau_s) + \lambda_s \rad(q_\rad^*, \tau_s)s(\theta_s)$. Let $\tilde{x}_s: [0,t_f] \rightarrow \R^n$ be the continuous reconstruction of $x_s$, such that at any time $\tau$ it holds the value from the nearest sample within $\epsilon$. Then,
\begin{align*}
     -&\rho^{\phi} (x ) = -\rho^{\phi} (x ) + \rho^{\phi} (\tilde x_s ) - \rho^{\phi} (\tilde x_s ) + \rho^{\phi} (x_s ) -\rho^{\phi} (x_s ) \\
     &= \| \mu(\theta, \lambda) - \mu(\theta_s, \lambda_s) \| \\ 
     &+ \max_{\tau \in [0,t_f], |\tau-\tau_s|\leq \epsilon} \mathcal{L}_\rho \|\hat x_s(\tau) - x_s(\tau_s)\| + \eta^* \\
     &\leq \ \eta^* + \mathcal{L}_\mu \|(\theta,\lambda)-(\theta_s,\lambda_s)\| + \mathcal{L}_\rho(\mathcal{L}_\cen+\mathcal{L}_\rad)|\tau-\tau_s| \\
     &\leq \eta^* + \sqrt{\mathcal{L}_\mu^2+\mathcal{L}_\rho^2 (\mathcal{L}_\cen+\mathcal{L}_\rad)^2}\epsilon \leq \eta^* + \mathcal{L}\epsilon \leq 0.
\end{align*}
Thus, $\rho^{\phi}(x) > 0$ for all $x : [0,t_f] \rightarrow \R^n$, such that, $x(\tau) \in \B(\cen(q_\cen^*, \tau), \rad(q_\rad^*, \tau)), \forall \tau \in [0, t_f]$.

Hence, when condition \eqref{eq:satisfy} is met, the STT $\Gamma(q_\cen^*, q_\rad^*, t)$, obtained by solving the SOP with finitely many samples, satisfies the conditions in Definition~\ref{def:stt}, completing the proof.
\end{proof}

\begin{remark}
The Lipschitz constants $\mathcal{L}\rad$ and $\mathcal{L}\cen$ required to verify condition~\eqref{eq:satisfy} can be estimated using the procedure described in \cite{DDSTT_arxiv}. The derivation of the Lipschitz constant for the STL robustness function, $\mathcal{L}_\rho$, is detailed in \cite{STL_Lip, STL_Lip_Sadegh}.
\end{remark}

\section{STT-CBF Framework}
This section presents the procedure for constructing TV-CBFs from the STT. 
We define the candidate TV-CBF as:
\begin{equation}
b(t,x) = 1 - \frac{1}{\rad^2(t)}\big(x-\cen(t)\big)^\top\big(x-\cen(t)\big),
\label{eq:barrier_function}
\end{equation}
where $\cen(t)$ and $\rad(t)$ are the STT center and radius at time $t$.

\begin{theorem}
The TV-CBF $b(t,x)$ defined in \eqref{eq:barrier_function} is continuously differentiable with respect to both $t$ and $x$.
\end{theorem}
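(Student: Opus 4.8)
The plan is to verify directly that both partial derivatives $\partial b/\partial x$ and $\partial b/\partial t$ exist and are jointly continuous on $[0,t_f]\times\X$, which by the standard criterion establishes that $b$ is continuously differentiable. The argument rests on three elementary facts about the building blocks of $b$: (i) by Definition~\ref{def:stt} the center $\cen(t)$ and radius $\rad(t)$ are continuously differentiable functions of $t$; (ii) by Theorem~\ref{th:constr} the radius is bounded away from zero, $\rad(t)\ge\rad_d>0$ for all $t\in[0,t_f]$; and (iii) the map $x\mapsto(x-\cen(t))^\top(x-\cen(t))=\|x-\cen(t)\|^2$ is a quadratic polynomial, hence smooth in $x$ and, through $\cen(t)$, $C^1$ in $t$.

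First I would address the temporal prefactor $1/\rad^2(t)$. Since $\rad$ is $C^1$ and, by fact (ii), never vanishes on $[0,t_f]$, the composition $t\mapsto 1/\rad^2(t)$ is $C^1$, with derivative $-2\dot\rad(t)/\rad^3(t)$ that is continuous because $\dot\rad$ is continuous and $\rad(t)\ge\rad_d>0$. This is the only place where one must be careful, since differentiability of the reciprocal would fail if the radius could reach zero; positivity from Theorem~\ref{th:constr} is exactly what rules this out, and I regard it as the main (though mild) obstacle.

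Next I would compute the two partials explicitly. Differentiating in $x$ gives
\[
    \frac{\partial b(t,x)}{\partial x} = -\frac{2}{\rad^2(t)}\,\big(x-\cen(t)\big),
\]
which is continuous in $(t,x)$ because it is the product of the continuous scalar $1/\rad^2(t)$ with the affine, hence continuous, vector $x-\cen(t)$. Differentiating in $t$, by the product and chain rules,
\[
    \frac{\partial b(t,x)}{\partial t} = \frac{2\dot\rad(t)}{\rad^3(t)}\,\|x-\cen(t)\|^2 + \frac{2}{\rad^2(t)}\,\big(x-\cen(t)\big)^\top\dot\cen(t),
\]
and each factor here — $\dot\rad(t)$, $\dot\cen(t)$, $1/\rad^2(t)$, $1/\rad^3(t)$, and the polynomial terms in $x$ — is continuous on $[0,t_f]\times\X$ by facts (i)--(iii).

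Finally, I would invoke the closure of $C^1$ under sums and products: $b$ is the constant $1$ minus the product of the $C^1$ scalar $1/\rad^2(t)$ and the $C^1$ scalar field $\|x-\cen(t)\|^2$, so both partials exist and are continuous, and therefore $b$ is continuously differentiable with respect to both $t$ and $x$. The entire argument is routine once positivity of the radius is in hand; no estimate beyond the nonvanishing denominator is required.
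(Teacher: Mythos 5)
Your proof is correct and follows essentially the same route as the paper's: both arguments reduce to the closure of $C^1$ under sums, products, and composition, with the strict positivity of $\rad(t)$ (guaranteed by Definition~\ref{def:stt}) being the one point that needs care for the reciprocal $1/\rad^2(t)$. You simply make the partial derivatives explicit where the paper leaves them implicit.
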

\begin{proof}
From Definition~\ref{def:stt}, the STT center and radius, $\cen(t)$ and $\rad(t)$, are continuously differentiable functions of time $t$, with $\rad(t)>0$ for all $t \in [0,t_f]$. 
As $b(t,x)$ \eqref{eq:barrier_function} is obtained by composing and multiplying continuously differentiable functions of $t$ and $x$, it follows from standard results in multivariable calculus that $b(t,x)$ is continuously differentiable with respect to both $t$ and $x$.
\end{proof}

We now formulate a quadratic programming (QP) problem that ensures $\Cs(t)$ is forward invariant, 
\begin{align}\label{eq:QP}
&\min_{u}u^T K u \notag\\
\text{s.t. } &-\mathcal{L}_g b(t,x) u \leq \mathcal{L}_fb(t,x) + \frac{\partial b(t,x)}{\partial t} + \alpha(b(t,x)),
\end{align}
where $K \in \R^{n \times n}$ is a positive semi-definite matrix. 

\begin{theorem} 
Consider the STL formula $\phi$ in \eqref{eqn:stl} and the corresponding STT $\Gamma(t)$. Suppose that the TV-CBF $b(t,x)$ is defined as in Equation~\eqref{eq:barrier_function} with its 0-superlevel set: 
$$\Cs(t) = \{ x \in \R^n \mid b(t,x) \geq 0 \}.$$ 
Then $b(t,x)$ is a valid CBF over the time interval $[0,t_f]$, and under the control input $u^*(t,x)$, obtained by solving the QP \eqref{eq:QP}, the system satisfies the STL specification, i.e., $x \models \phi$. 
\end{theorem}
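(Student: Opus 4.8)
The plan is to prove the two assertions in turn: first that $b(t,x)$ is a valid CBF, and then that forward invariance of $\Cs(t)$ forces the controlled trajectory to satisfy $\phi$. I would begin by recording the geometric identity that makes the scheme work: since $b(t,x) \ge 0$ is equivalent to $\|x - \cen(t)\| \le \rad(t)$, the zero-superlevel set is exactly the closed ball $\Cs(t) = \{x : \|x-\cen(t)\| \le \rad(t)\}$, whose interior coincides with the STT region $\Gamma(t) = \B(\cen(t),\rad(t))$. Because Definition~\ref{def:stt} guarantees $\rad(t) > 0$ for all $t \in [0,t_f]$, each $\Cs(t)$ is nonempty, and $b$ is continuously differentiable by the preceding theorem. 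To confirm that $b$ is a \emph{candidate} CBF I would exhibit, for an arbitrary $x_0 \in \Cs(0)$, the explicit absolutely continuous curve $x(t) = \cen(t) + \tfrac{\rad(t)}{\rad(0)}(x_0 - \cen(0))$, which stays in $\Cs(t)$ since $\|x(t)-\cen(t)\| = \tfrac{\rad(t)}{\rad(0)}\|x_0-\cen(0)\| \le \rad(t)$.

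For validity, I would unpack the QP constraint using the Lie-derivative notation $\mathcal{L}_f b = \tfrac{\partial b}{\partial x}^\top f$ and $\mathcal{L}_g b = \tfrac{\partial b}{\partial x}^\top g$. The inequality $-\mathcal{L}_g b(t,x)\,u \le \mathcal{L}_f b(t,x) + \tfrac{\partial b}{\partial t} + \alpha(b(t,x))$ rearranges precisely to $\tfrac{\partial b}{\partial x}^\top(f + gu) + \tfrac{\partial b}{\partial t} \ge -\alpha(b(t,x))$, so any $u^*$ feasible for the QP lies in $\U_a(t,x)$. The substance of validity is therefore that $\U_a(t,x)$ is nonempty for every $(t,x) \in [0,t_f]\times\Cs(t)$, i.e.\ that the QP is feasible everywhere. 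I would argue this by cases: wherever $\mathcal{L}_g b(t,x) \neq 0$ the left-hand side can be steered to any value by choosing $u$, so the constraint is satisfiable; on the set where $\mathcal{L}_g b(t,x) = 0$ one instead needs the uncontrolled terms $\mathcal{L}_f b + \tfrac{\partial b}{\partial t} + \alpha(b)$ to be nonnegative, and here the freedom in selecting the class-$\mathcal{K}$ function $\alpha$ together with $b > 0$ in the interior is used. Securing a single $\alpha$ that closes the constraint over the whole tube is the main obstacle, since it implicitly requires an actuation condition on $g$ guaranteeing $\mathcal{L}_g b \neq 0$ wherever the drift and time-variation alone would push the state out of $\Cs(t)$; I would make this the standing assumption under which the claim holds.

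Granting validity, the STL conclusion follows quickly. By Theorem~\ref{thm:tvcbf}, provided $u^*(t,x)$ is locally Lipschitz in $x$ and piecewise continuous in $t$ (a regularity property I would invoke for the QP solution map under a constraint-qualification condition), the set $\Cs(t)$ is forward invariant under $u^*$; hence any solution with $x(0) \in \Cs(0)$ satisfies $x(\tau) \in \Cs(\tau)$ for all $\tau \in [0,t_f]$. Since $\Cs(\tau)$ is the closed ball $\{x : \|x-\cen(\tau)\| \le \rad(\tau)\}$, this is exactly the STT containment, with $x(\tau)$ in the closure of $\Gamma(\tau)$. I would then appeal to the STT guarantee of Definition~\ref{def:stt}, established constructively in Theorem~\ref{th:constr}: because the robustness constraint \eqref{subeqn:ROP3} ranges over $\lambda \in [0,1]$, it covers the \emph{closed} ball (boundary included), so every trajectory confined to $\Cs(\tau)$ satisfies $\rho^\phi(x) > 0$. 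Handling this open-versus-closed-ball point is the one technical subtlety I would flag, but the $\lambda = 1$ endpoint in the ROP resolves it. Finally, $\rho^\phi(x) > 0$ implies $(x,0) \models \phi$, that is $x \models \phi$, completing the proof.
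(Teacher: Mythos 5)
Your proof follows the same route as the paper's: identify the 0-superlevel set $\Cs(t)$ with the tube, use positivity of the radius for nonemptiness and candidacy, invoke Theorem~\ref{thm:tvcbf} for forward invariance under the QP solution, and conclude STL satisfaction from the tube's robustness guarantee. The difference is that you make explicit several steps the paper silently elides, and these are worth keeping. Your explicit curve $x(t)=\cen(t)+\tfrac{\rad(t)}{\rad(0)}(x_0-\cen(0))$ genuinely establishes candidacy, whereas the paper only argues nonemptiness of each $\Cs(t)$. More importantly, you correctly observe that \emph{validity} --- nonemptiness of $\U_a(t,x)$, i.e., feasibility of the QP on all of $[0,t_f]\times\Cs(t)$ --- does not follow from candidacy and is nowhere established in the paper's proof; it requires an actuation condition on $g$ (or an unbounded input set together with $\tfrac{\partial b}{\partial x}^\top g\neq 0$ where the drift and $\tfrac{\partial b}{\partial t}$ would otherwise violate the constraint), exactly as you flag with your standing assumption. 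The same goes for the local Lipschitz continuity of the QP minimizer $u^*$, which Theorem~\ref{thm:tvcbf} explicitly requires and which the paper simply asserts. Finally, the open-versus-closed-ball point is real: $\B$ is defined with a strict inequality, so $\Cs(t)$ is the closure of $\Gamma(t)$, and your observation that the $\lambda\in[0,1]$ range in the ROP covers the boundary is the right way to close that gap. In short, your proposal is the paper's argument carried out more carefully, at the cost of the additional assumptions you would need to state explicitly.
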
 

\begin{figure}[t]
    \centering
    \includegraphics[width=0.48\textwidth]{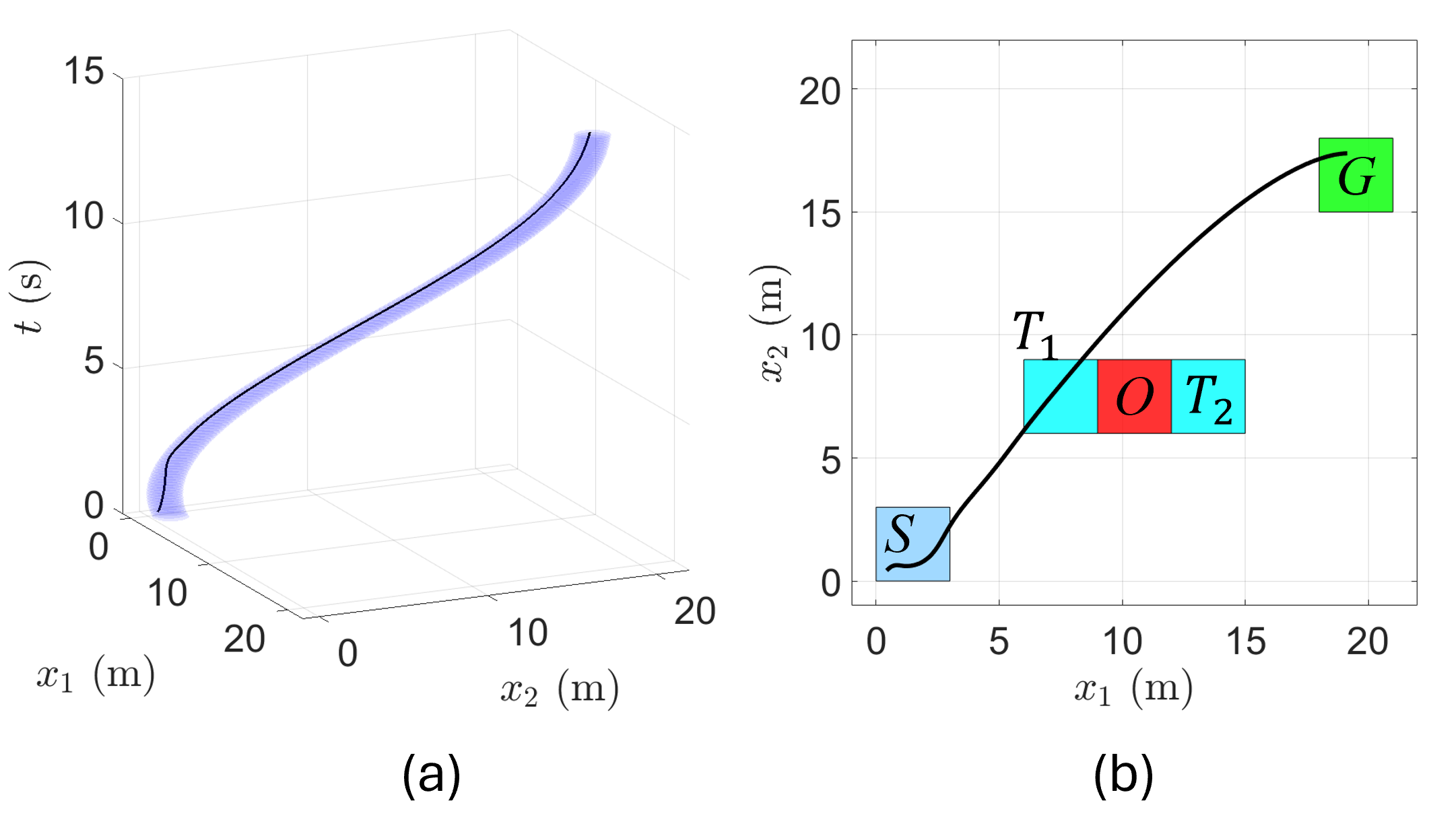}
    \caption{(a) STT (blue) and (b) the corresponding trajectory (black) for the differential-drive robot. The vehicle starts from $S$ (blue), visits $T_1$ or $T_2$ (cyan), then reaches $G$ (green), while avoiding obstacle $O$ (red) throughout the mission.}
    \label{fig:diff}
\end{figure}

\begin{figure*}[t]
    \centering
    \includegraphics[width=\textwidth]{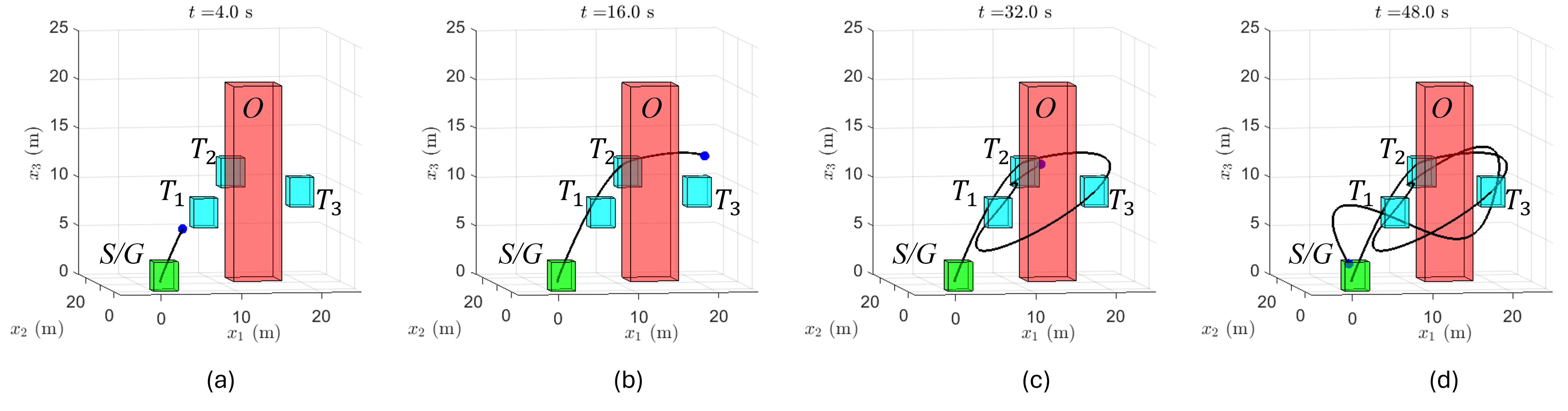}
    \caption{Quadrotor trajectory for the 3D surveillance task. The vehicle starts from $S$ (green), repeatedly visits $T_1$, $T_2$, and $T_3$ (cyan) in sequence, then reaches $G$ (green), while avoiding obstacle $O$ (red) throughout the mission. (a)–(d) show the vehicle trajectory at different time instances.}
    \label{fig:uav}
\end{figure*}


\begin{proof} 
From the definition of $b(t,x)$, we have: 
$$b(t,x) \geq 0 \iff \big(x-\cen(t)\big)^\top\big(x-\cen(t)\big) \leq \rad^2(t),$$
which implies that the 0-superlevel set of $b(t,x)$ is precisely the ball centered at $\cen(t)$ with radius $\rad(t)$, i.e., the STT $\Gamma(t)$. Thus, the relationship between $b(t,x)$ and $\Gamma(t)$ can be expressed as: 
\begin{subequations}
\begin{align} b(t,x) \geq 0, & \text{ if } x \in \Gamma(t) \\ b(t,x) = 0, & \text{ if } x \in \partial\Gamma(t) \\ b(t,x) > 0, & \text{ if } x \in \text{Int}\big(\Gamma(t)\big).
\end{align}
\end{subequations} 
Hence, the 0-superlevel set of $b(t,x)$ is equivalent to the tube:
$$\Cs(t) = \big \{x \in \R^n \mid \|x-\cen(t)\| \leq \rad(t) \big \} = \Gamma(t).$$ 
According to Definition~\ref{def:stt}, $\rad(t)$ is strictly positive for all $t \in [0, t_f]$. Consequently, $\Cs(t)$ is non-empty, ensuring that $b(t,x)$ defines a valid candidate TV-CBF.

Thus, from Theorem~\ref{thm:tvcbf}, the control input $u^*(t,x)$, obtained from the QP \eqref{eq:QP}, enforces forward invariance of $\Cs(t)$ under the system dynamics \eqref{eqn:sysdyn}:
$$x(0) \in \Cs(0) \implies x(t) \in \Cs(t) = \Gamma(t), \forall t \in [0,t_f].$$
Since the STT $\Gamma(t)$ is constructed such that $\rho^\phi(x) > 0$ for all $x \in \Gamma(t)$, maintaining the state within $\Gamma(t)$ ensures satisfaction of the STL formula.
Therefore, the closed-loop trajectory satisfies the STL specification, i.e., $x \models \phi$.
\end{proof}

To illustrate the TV-CBF construction from the STT, consider the 2D example in Figure~\ref{fig:bf}. The figure shows the temporal evolution of the STT $\Gamma(t)$ and the 0-superlevel set $\Cs(t)$ of the corresponding $b(t,x)$, representing the safe states where the barrier function is non-negative. In this example, an omnidirectional robot must reach the target region $T$ within $[0,6]$ s while avoiding the obstacle region $O$, satisfying the STL specification $\F_{[0,6]} T \land \G_{[0,6]} \neg O$.

\section{Case Studies}
We demonstrate the effectiveness of the proposed approach on two systems: a differential-drive mobile robot in a 2D environment and a quadrotor operating in a 3D environment. 
For each candidate value of $\eta$, the feasibility of the SOP is checked using the Z3 SMT solver \cite{Z3}. A near-minimal feasible value $\eta^*$ is then obtained via a bisection search over a user-defined interval until convergence. All case studies are implemented on a Linux Ubuntu system with an Intel i7-7700 CPU and 32 GB RAM. 
The implementation code is available at:
\href{https://github.com/RatnangshuD/STL_CBF_via_STT_main}{https://github.com/RatnangshuD/STL\_CBF\_via\_STT\_main}.

In case studies, each region $M$ is represented as a hyperrectangle with center $\mathrm{center}(M)$ and size $d_M$, described by the predicate $\tilde{M} := |x - \mathrm{center}(M)|_\infty < d_M$.

\subsection{Differential-Drive Mobile Robot}
We first consider a differential-drive mobile robot operating in a 2D workspace, as shown in Figure~\ref{fig:diff}. The system is assigned the following STL task: 
$$\psi = \tilde{S} \implies \F_{[6, 7]} (\tilde{T}_1 \vee \tilde{T}_2) \wedge \F_{[14,15]} \tilde{G} \wedge \G_{[0,15]} \neg \tilde{O}$$
with $S = [0,0.3] \times [0,0.3]$, $T_1 = [6,9] \times [6,9]$, $T_2 = [12,15] \times [6,9]$, $G = [18,21] \times [15,18]$, and $O = [9,12] \times [6,9]$.

This STL formula specifies a sequential temporal task where the robot, starting from the initial region $S$, must eventually reach either region $T_1$ or $T_2$ within 7–8 seconds, and subsequently proceed to the goal region $G$ within 14–15 seconds. Throughout the entire time horizon $[0,15]$ seconds, the robot must continuously avoid the obstacle region $O$. 
Figure~\ref{fig:diff} shows the generated STT and the system trajectories.

\subsection{Quadrotor in a 3D Environment}
We next consider a quadrotor \cite{Drone_example} performing a surveillance-like mission around a building, as shown in Figure~\ref{fig:uav}. The system is assigned the following STL task:
\begin{align*}
    \psi = &\G_{[0,42]} \big( (\tilde{S} \implies \F_{[6,7]} \tilde{T}_1) \wedge (\tilde{T}_1 \implies \F_{[6,7]} \tilde{T}_2) \\
    &\wedge (\tilde{T}_2 \implies \F_{[6,7]} \tilde{T}_3) \wedge (\tilde{T}_3 \implies \F_{[6,7]} \tilde{T}_1) \big) \\
    &\wedge \F_{[48,50]} \tilde{G} \wedge \G_{[0,50]} \neg \tilde{O},    
\end{align*}
with $S = G = [0,3]\times[0,3]\times[0,3]$, $T_1 = [6,9]\times[6,9]\times[0,3]$, $T_2 = [12,15]\times[21,24]\times[0,3]$, $T_3 = [18,21]\times[6,9]\times[0,3]$, and $O = [12,15]\times[12,15]\times[0,20]$.

This specification captures a persistent surveillance task, where the quadrotor, starting from the initial region $S$, must repeatedly visit the target regions $T_1$, $T_2$, and $T_3$ in sequence every 6–7 seconds until 42 seconds. After completing this, the quadrotor must reach the goal region $G$ between 48 and 50 seconds, while avoiding the obstacle region $O$ throughout the 50-second mission. Figure~\ref{fig:uav} shows the quadrotor trajectory.

\begin{figure*}[t]
    \centering
    \includegraphics[width=\textwidth]{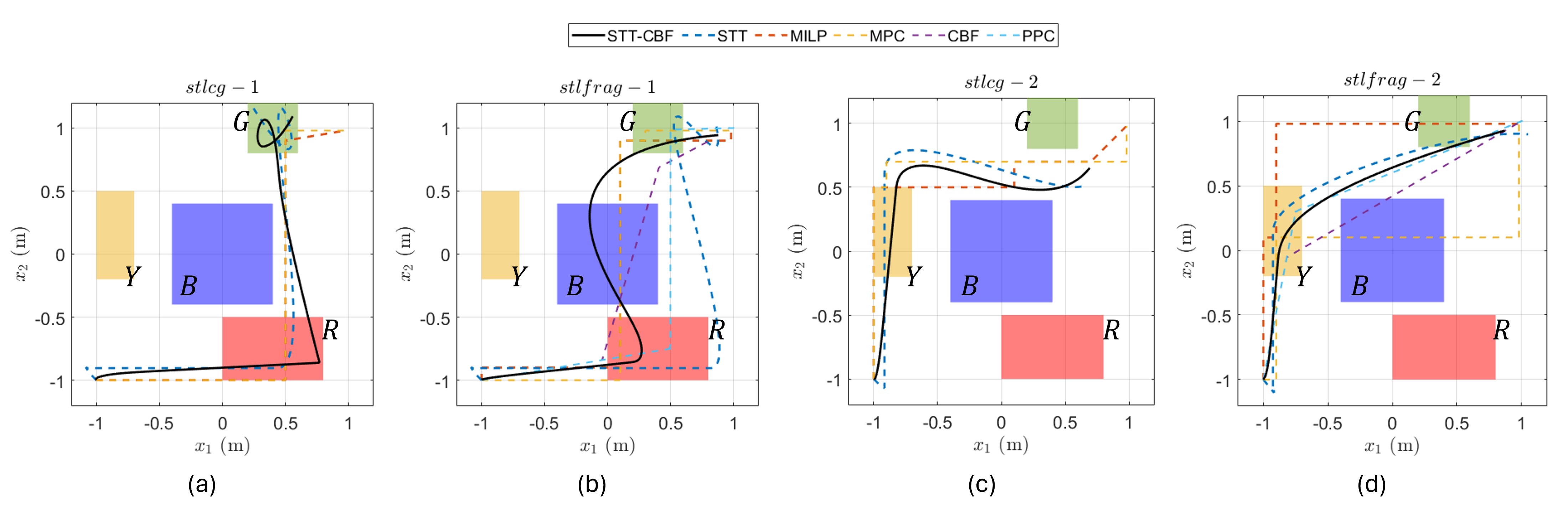}
    \caption{Comparison of trajectories obtained using the proposed STT-CBF framework, STT-based controller, MILP, MPC, CBF, and Funnel approaches for the four benchmark tasks (\textit{stlcg-1}, \textit{stlcg-2}, \textit{stlfrag-1}, and \textit{stlfrag-2}).}
    \label{fig:compare}
\end{figure*}

\begin{table}[!ht]
\centering
\caption{Computation time (seconds)}\label{table:computation_time}
\resizebox{0.48\textwidth}{!}{%
\begin{tabular}{|l|l|l|l|l|l|l|}
\hline
Case    & \textbf{STT-CBF} & STT & MILP & MPC  & CBF & PPC \\ \hline
\textit{stlcg-1} & \textbf{0.499} & 4.933  & 46.093  & 52.435  & N/A     & N/A        \\ \hline
\textit{stlfrag-1} & \textbf{0.416} & 0.205  & 6.905   & 7.458   & 3.774  & 0.142      \\ \hline
\textit{stlcg-2} & \textbf{0.426} & 8.743   & 84.814 & 101.280 & N/A     & N/A        \\ \hline
\textit{stlfrag-2} & \textbf{0.351} & 0.317   & 6.954  & 4.561  & 3.930  & 0.147      \\ \hline
\end{tabular}
}
\end{table}

\subsection{Comparison}
We evaluate our method on two benchmark STL tasks \cite{STL_multi_chuchu}, \textit{stlcg-1}: $\big(\F_{[0,10]}\G_{[0,5]}R \big) \wedge \big(\F_{[0,10]}\G_{[0,5]}G \big) \wedge \big(\G_{[0,10]}\neg B \big)$, \textit{stlcg-2}: $\big(\F_{[0,10]}\G_{[0,5]}Y \big) \wedge \big(\G_{[0,10]}\neg G \big) \wedge \big(\G_{[0,10]}\neg B \big)$. We also include their simplified STL fragment counterparts \cite{STL_PPC}, \textit{stlfrag-1}: $\big(\F_{[0,10]}R \big) \wedge \big(\F_{[0,10]}G \big)$, \textit{stlfrag-2}: $\F_{[0,10]}Y $.

Table~\ref{table:computation_time} reports the total computation time. The proposed spherical STT design not only enables smooth construction of TV-CBFs, but also significantly reduces synthesis time compared to the previous hyper-rectangular design \cite{das2025approximation}, due to fewer constraints and optimization variables. Compared to MILP \cite{STL_reactive_MILP} and MPC \cite{STL_MPC} approaches, STT-CBF achieves faster computation through efficient CBF-based synthesis. Unlike prior CBF \cite{CBF_STL} and PPC \cite{STL_funnel} methods, that handle only STL fragments, our framework addresses the full class of STL specifications. All computations were done with a time step of $0.05$~s, and the system trajectories are shown in Figure~\ref{fig:compare}.

\section{Conclusion}
This paper presented a unified framework for synthesizing TV-CBFs for general STL specifications using STTs. We proposed a spherical STT design and formulated its synthesis as a ROP using the robustness semantics of STL. By sampling over time and state, we reformulated it as a SOP and established conditions ensuring that its solution satisfies the full class of STL specifications. The resulting STTs were then used to construct TV-CBFs, guaranteeing STL satisfaction under any control law that renders them invariant. Case studies on a differential-drive robot and a quadrotor demonstrated the efficiency and scalability of the method, and comparison studies showed clear advantages over MILP-, MPC-, CBF- and PPC- based approaches. Future work will focus on extending the STT synthesis framework to explicitly incorporate input constraints and validate it further through hardware experiments.

\bibliographystyle{unsrt} 
\bibliography{sources} 

\appendix

\begin{proof}[of Theorem \ref{thm:lip_rob}]
    We show that $\mu(\theta, \lambda)
    := -\rho^{\phi} \big( x \big)$ with $x(\tau) = \cen(q_\cen, \tau) + \lambda \rad(q_\rad, \tau)s(\theta)$ for all $\tau \in [0,t_f]$, is Lipschitz continuous with respect to. \textit{(i)} $\theta$ and \textit{(ii)} $\lambda$,
    and then combine the results to obtain the Lipschitz constant.    
    
    From \cite{STL_Lip}, we know that there exists $\mathcal{L}_{\rho}\in \R^+$, such that
    $$| \rho^{\phi}(x_1, {.\kern 0em.\kern 0em.}, x_{k}, {.\kern 0em.\kern 0em.}, x_n) - \rho^{\phi}(x_1, {.\kern 0em.\kern 0em.}, \hat{x}_{k}, {.\kern 0em.\kern 0em.}, x_n) | \leq \mathcal{L}_{\rho} | {x}_{k} - \hat{x}_{k} |.$$

    Recall that $\|s(\theta)\|\leq 1$ for all $\theta$, and $\overline{\rad}:=\max_{t\in[0,t_f]}\rad(q_\rad,t)$ is finite because $\rad(\cdot)$ is Lipschitz on the compact interval $[0,t_f]$. Also let $\mathcal{L}_s$ be a Lipschitz bound for $s(\theta)$ (one can use $\mathcal{L}_s=\sqrt{n(n-1)}$ as a conservative bound).
     
    \begin{itemize}
        \item[\textit{(i)}] \textit{Lipschitz in $\theta$.} \\
        Fix some $(\lambda, \tau) \in [0,1] \times [0,t_f]$. For any $\theta,\hat\theta \in [0,2\pi]^n$,
        \begin{align*}
            & | \mu(\theta, \lambda) - \mu(\hat\theta, \lambda) | \\
            \leq  & \mathcal{L}_\rho \max_{\tau \in [0,t_f]} \| \lambda \rad(q_\rad, \tau) \big( s(\theta) - s(\hat\theta) \big) \|
            \leq \mathcal{L}_\rho \overline{\rad} \mathcal{L}_s \|\theta - \hat\theta\|.
        \end{align*}
        
        \item[\textit{(ii)}] \textit{Lipschitz in $\lambda$.} \\ 
        Fix some $(\theta, \tau) \in [0,2\pi]^n \times [0, t_f]$. For any $\lambda,\hat\lambda \in [0,1]$,
        \begin{align*}
            & | \mu(\theta, \lambda) - \mu(\theta, \hat\lambda) | \\
            \leq & \mathcal{L}_\rho  \max_{\tau \in [0,t_f]} \| (\lambda-\hat\lambda) \rad(q_\rad, \tau) s(\theta) \|
            \leq \mathcal{L}_\rho \overline{\rad} |\lambda - \hat\lambda|.
        \end{align*}

    \end{itemize}

    Now combining the above, for arbitrary $(\theta,\lambda), (\hat\theta,\hat\lambda) \in [0,2\pi]^n \times [0, 1]$, we can write
    \begin{align*}
        &| \mu(\theta, \lambda) - \mu(\hat\theta, \hat\lambda) | \\
        = & | \mu(\theta, \lambda) - \mu(\hat\theta, \lambda) 
        + \mu(\hat\theta, \lambda) - \mu(\hat\theta, \hat\lambda) \\ 
        &+ \mu(\hat\theta, \hat\lambda) - \mu(\hat\theta, \hat\lambda) | \\
        \leq & \mathcal{L}_\rho\mathcal{L}_s\overline{\rad}\|\theta - \hat\theta\| + \mathcal{L}_\rho\overline{\rad}|\lambda-\hat\lambda|.
    \end{align*}
    From Cauchy-Schwarz inequality,
    \begin{align}
        | \mu(\theta, \lambda) - \mu(\hat\theta, \hat\lambda) |\leq & \mathcal{L}_\mu \|(\theta, \lambda)-(\hat\theta, \hat\lambda)\|,
    \end{align}
    where 
    $\mathcal{L}_\mu = \mathcal{L}_\rho \overline{\rad}\sqrt{ (\mathcal{L}_s^2 + 1)}$.
\end{proof}

\end{document}